\newtheorem{theorem}{Theorem}
\newtheorem{lemma}{Lemma}
\begin{document}

\title{Variance-Constrained Capacity of the Molecular Timing Channel with Synchronization Error}

\author{\authorblockN{Malcolm Egan$^{\dag}$, Yansha Deng$^\ddag$, Maged Elkashlan$^{\ddag}$, and Trung Q. Duong$^{\ast}$
\authorblockA{\authorrefmark{2}\footnotesize Faculty of Electrical Engineering, Czech Technical University in Prague, Czech Republic }
\authorblockA{\authorrefmark{3}\footnotesize School of Electronic Engineering and Computer Science, Queen Mary University of London, UK }
\authorblockA{\authorrefmark{1}\footnotesize School of Electronic Engineering and Computer Science, Queen's University Belfast, UK }
}}

\maketitle

\begin{abstract}
Molecular communication is set to play an important role in the \textit{design} of complex biological and chemical systems. An important class of molecular communication systems is based on the timing channel, where information is encoded in the delay of the transmitted molecule---a synchronous approach. At present, a widely used modeling assumption is the perfect synchronization between the transmitter and the receiver. Unfortunately, this assumption is unlikely to hold in most practical molecular systems. To remedy this, we introduce a clock into the model---leading to the molecular timing channel with synchronization error. To quantify the behavior of this new system, we derive upper and lower bounds on the variance-constrained capacity, which we view as the step between the mean-delay and the peak-delay constrained capacity. By numerically evaluating our bounds, we obtain a key practical insight: the drift velocity of the clock links does not need to be significantly larger than the drift velocity of the information link, in order to achieve the variance-constrained capacity with perfect synchronization.
\end{abstract}

\maketitle

\section{Introduction}

% Why study molecular communications
With the rise of synthetic biology and chemistry, new applications are abundant: vaccines for malaria; biofuels; and even manipulation of bacteria colony populations \cite{You2004}. Despite early successes, an improved understanding of the underlying mechanisms of complex biological networks is required to go further. Communication is a fundamental feature of many of these mechanisms: biological networks rely heavily on communication between different components. In contrast with traditional cellular wireless systems, communication is often between nano-scale devices with unreliable energy sources. This means that information is carried by molecules with messages encoded in concentration levels or transmission delays.

% Why timing channel and problems with current model
An important class of molecular communication systems are those based on diffusion, where molecules carrying information propagate via the random motion induced by the collisions with the fluid molecules. A key example is in pheromonal communication \cite{Bossert1963}. Recently, a variety of diffusion-based communication mechanisms have been proposed (see e.g., \cite{Pierobon2013,Kadloor2012}). Within these mechanisms, the molecular timing channel has the potential to offer the highest transmission rates, assuming that the channel is sufficiently reliable. This is achieved by encoding information in the delay between the time the molecule is transmitted and the beginning of the time slot \cite{Kadloor2012,eckford2007nanoscale}. An important model for the molecule timing channel is the additive inverse Gaussian noise (AIGN) channel, which has been evaluated via capacity bounds with constraints on the both the average delay \cite{Srinivas2012} and the peak-delay \cite{Chang2012}.

% Our modifications (in this paper)
A key assumption in molecular timing channel models is that the transmitter and receiver agree on when the information molecule is sent; i.e., communication is synchronous. Unfortunately, this assumption does not hold in general. In contrast with other communication networks---such as wireless cellular networks---it is not easy to share global clock information throughout the  molecular network.

% Our results and comparison with previous model
In this paper, we introduce the molecular timing channel model with synchronization error. The basis of our new model is a global clock that sends molecules to synchronize the system, which is added to the standard timing channel. Our model consists of three molecular links: the transmitter-receiver timing channel; the clock-transmitter link; and the clock-receiver link. The transmitter and receiver are informed when a time slot begins based on the arrival of the clock molecule. Importantly, the clock molecule will typically not arrive at the same time, which leads to synchronization error.

In contrast to the standard AIGN timing channel model in \cite{Srinivas2012,Chang2012}, the timing channel with synchronization error must  accommodate the information molecule arriving before the start of the receiver's time slot. As such, standard bounds on the capacity are not directly applicable. To overcome this problem, we evaluate the molecular timing channel with synchronization error in terms of the variance-constrained capacity; i.e., the capacity when the distribution of the delay (corresponding to messages) has both mean and variance constraints. This is achieved via new upper and lower variance-constrained capacity bounds. Importantly, our approach is closely related to the peak-delay constrained capacity, and as such can be used to guide system design.

% Findings...
Numerical evaluation of our bounds suggests that the variance-constrained capacity can be highly dependent on the information (transmitter to receiver) link as well as the clock-transmitter link and the clock-receiver link. Fortunately, our numerical results also suggest that in order to achieve the capacity upper bound with perfect synchronization, the drift velocities of the clock links do not need to be significantly larger than the drift velocity of the information link.

\section{System Model}

\subsection{Network Topology and Synchronization}

Consider the molecular timing channel illustrated in Fig.~\ref{fig:sys_model_illus}, consisting of a global clock, a transmitter, and a receiver. In each time slot, the transmitter sends a single information molecule to the receiver. The message is encoded in the time between the onset of the current time slot and when the information molecule is emitted. The information molecule then diffuses through the fluid medium, eventually arriving at the receiver.

\begin{figure}[h!]
   \begin{center}
        \includegraphics[width=3in]{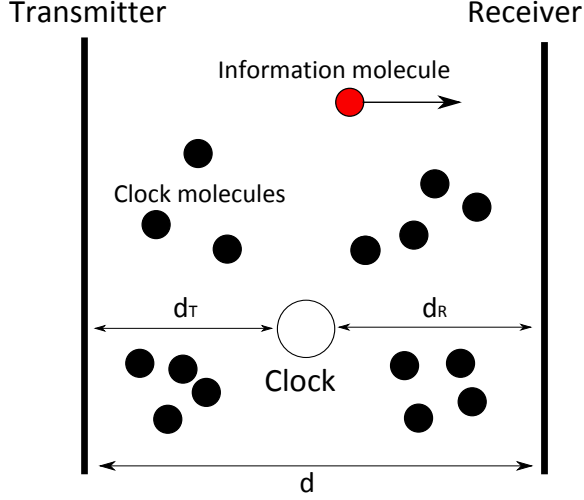}
        \caption{The synchronous molecular timing channel consisting of a global clock, a transmitter, and a receiver.}
        \label{fig:sys_model_illus}
    \end{center}
\end{figure}

Typically, molecular communication systems have severe complexity constraints. The lack of processing capability means that the molecular transmitter and receiver cannot independently determine the beginning of the time slot. As such, the beginning of a time slot is signaled by the global clock via the emission of synchronization molecules, which do not interact with the information molecule and can be detected by the transmitter and receiver.

Our model is based on the following assumptions.
\begin{enumerate}
\item The physical network consisting of the transmitter, the receiver, and the global clock is two-dimensional. We denote the distance between the transmitter and the receiver as $d$, the distance between the clock and the transmitter as $d_T$, and the distance between the clock and the receiver as $d_R$.
\item The transmitter and the receiver forms parallel lines, which partition $\mathbb{R}^2$.
\item Any information molecule or synchronization molecule that hits the receiver boundary is permanently absorbed.
\item There is zero friction in the fluid media.
\item The synchronization molecules are emitted from the global clock at time $t = 0$; i.e., the synchronization clock is reset at the onset of a time slot.
\item The transmitter can perfectly control the release time of each information molecule.
\item The receiver can perfectly distinguish between the information and clock molecules.
\item There is no interference from the molecules transmitted in previous time slots.
\end{enumerate}

The assumption that the receiver is partitioned by a line implies that the time taken for the molecule to diffuse from the transmitter to the receiver is only dependent on how fast the molecule travels along the line perpendicular to the receiver. As such, the position process that determines the hitting time is one-dimensional.

Our assumption of no interference from molecules transmitted in previous time slots is practical in many molecular channels. This is due to the fact that power constraints (arising from unreliable energy-harvesting mechanisms) can only support sparse transmissions. A similar situation also occurs in wireless nano-sensor networks \cite{Sung2014}.

In light of the zero friction assumption, a good model for the position of the information and clock molecules is the Wiener process, $W(x)$ \cite{Srinivas2012}. In particular, $W(x)$ is the continuous-time process with position increments $R_i = W(x_{i-1}) - W(x_i)$, where $0 \leq x_1 < x_2 < \cdots < x_k$ is a sequence of time instants arranged in increasing order. Moreover, $R_i \sim N(v(x_i - x_{i-1}), \sigma^2(x_i - x_{i-1}))$, where $v$ is the drift\footnote{Drift is typically caused by a difference in concentration between the transmitter and the receiver.} velocity of the fluid medium and the variance $\sigma^2 = D/2$ is governed by the diffusion coefficient\footnote{$D$ typically lies between $1 - 10 \mu {m^2}$ \cite{Berthier2010}.} $D$.

An important consequence of modeling the diffusion of the synchronization molecules as random processes is that there will be a delay between emission and reception. As such, the transmitter and receiver will not agree on the precise starting time of the current time slot, which introduces \textit{synchronization error}.

\subsection{Channel Model, Encoding and Decoding}

At the beginning of the time slot (from the perspective of the global clock), clock molecule is transmitted throughout the molecular network. We assume that the channels between the clock and both the transmitter and the receiver are governed by the Wiener process. As such, there will be a random delay of $E_T$ and $E_R$, before the clock molecule arrives at the transmitter and receiver, respectively. In particular, $E_T$ and $E_R$ are distributed according to the inverse Gaussian (IG) distribution\footnote{The inverse Gaussian  distribution corresponds to the distribution of the first hitting time of the Wiener process \cite{Srinivas2012}.}, with probability density functions (pdfs)
\begin{align}
f_{E_T}(x) =\left\{
\begin{array}{ll}
\sqrt {\frac{\lambda_T }{{2\pi {x^3}}}} \exp \left( { - \frac{{\lambda_T {{\left( {x - \mu_T} \right)}^2}}}{{2{\mu_T ^2}x}}} \right)
 &\mbox{$x > 0$} ;\\
0 &\mbox{$x < 0$},\;
\end{array}
\right.
\end{align}
and
\begin{align}
f_{E_R}(x) =\left\{
\begin{array}{ll}
\sqrt {\frac{\lambda_R }{{2\pi {x^3}}}} \exp \left( { - \frac{{\lambda_R {{\left( {x - \mu_R} \right)}^2}}}{{2{\mu_R ^2}x}}} \right)
 &\mbox{$x > 0$} ;\\
0 &\mbox{$x < 0$},\;
\end{array}
\right.
\end{align}
where $\mu_T = \frac{d_T}{v_C}$ and $\lambda_T = \frac{d_T^2}{\sigma^2} $  are parameters of the pdf of $E_T$, and $\mu_R = \frac{d_R}{v_C}$ and $\lambda_R = \frac{d_R}{\sigma^2}$  are parameters of the pdf of $E_R$.
%and receiver, respectively. In particular, $\mu_T = \frac{d_T}{v_C},~\mu_R = \frac{d_R}{v_C},~\lambda_T = \frac{d_T^2}{\sigma^2},~\lambda_R = \frac{d_R}{\sigma^2}$.

Once the transmitter receives a clock molecule, the transmitter's time slot begins and a new message is transmitted. Let $\{1,2,\ldots,M\}$ be the finite set of possible messages. To encode a message $m \in \{1,2,\ldots,M\}$, the transmitter maps the message to a delay $X$. The time that the transmitter emits the information molecule (according to the global clock) is then $X + E_T$.

Next, the information molecule diffuses through the channel between the transmitter and the receiver (the information link), and travels for a random time $N$, which is distributed according to the IG distribution; similar to  the clock molecule channels. The pdf of $N$ is then
\begin{align}
f_{N}(x) =\left\{
\begin{array}{ll}
\sqrt {\frac{\lambda_N }{{2\pi {x^3}}}} \exp \left( { - \frac{{\lambda_N {{\left( {x - \mu_N} \right)}^2}}}{{2{\mu_N ^2}x}}} \right)
 &\mbox{$x > 0$} ;\\
0 &\mbox{$x < 0$},\;
\end{array}
\right.
\end{align}
where $\mu_N = \frac{d}{v_I}$ and$ ~\lambda_N = \frac{d^2}{\sigma^2}$.

After traversing the channel, the information molecule is absorbed by the receiver. From the perspective of the global clock, the absorption occurs at time $X + E_T + N$. However, the receiver observes delay $Y = X + E_T + N - E_R$ as the receiver's time slot only begins at time $E_R$ (according to the global clock). Observe that $Y \in (-\infty,\infty)$; that is, the information molecule can potentially arrive before the receiver identifies that a new time slot has begun. To cope with the situation where the information molecule arrives before the clock molecule, the receiver decodes
\begin{align}
Z = \max\{X + E_T + N - E_R,0\},
\end{align}
which ensures that the received signal lies in $[0,\infty)$.

\section{Variance-Constrained Capacity}\label{sec:capacity}

In this section, we derive new bounds on the capacity with mean and variance constraints of the molecular timing channel with synchronization error. In particular, we obtain both upper and lower bounds. Our lower bound is based on the data processing inequality, for which we provide an intuitive interpretation in terms of the signal processing capabilities of the receiver.

The variance-constrained capacity is defined as
\begin{align}\label{eq:var_cons_cap}
C = \max_{p_X(x):\mathbb{E}[X] = m,~m^2 \leq \mathbb{E}[X^2] \leq a} I(X;Z),
\end{align}
where $p_X(x)$ is a pdf and $Z = \max\{X + E_T + N - E_R,0\}$. We note that the variance constrained capacity is in fact related to the peak constrained capacity. To see this, observe that
\begin{align}\label{eq:holder}
\frac{\mathbb{E}[X^2]}{\mathbb{E}[X]} \leq X_{peak},
\end{align}
which follows from H{\"o}lder's inequality (a generalization of the Cauchy-Schwarz inequality) \cite{Hardy1952}. As such, it is possible to view the variance-constrained capacity as the step between the mean-delay constrained capacity and the peak-delay constrained capacity. This is important since capacity bounds with the peak-delay constrained input are difficult to obtain and are more complicated than the capacity bounds with the mean or variance constraints on the delay.

Although bounds are already well-established for the standard molecular timing channel, the synchronization error introduces new difficulties. In particular, as the receiver decodes based on $Z = \max\{X + E_T + N - E_R,0\}$ (instead of $X + N$ as in \cite{Srinivas2012,Chang2012}), the standard bounds cannot be directly applied. We solve this problem next.

\subsection{Capacity Upper Bound}

Our first key result is a new upper bound on the variance-constrained capacity (see (\ref{eq:var_cons_cap})). The statistics of $N,E_T$ and $E_R$ play key roles, as well as the mean and variance constraints $m$ and $a$, respectively.
\begin{theorem}[Capacity upper bound]\label{thrm:cap_ub}
An upper bound on the variance-constrained capacity (in nats) of the timing channel with synchronization error is given by
\begin{align}\label{upper bound}
C_{UB} &= \log ((m + \mu _N + \mu _{E_T})) + 1 - \min(g(c^*),0)\notag\\
&~~~ - h(N + E_T)\frac{m^2}{a} \int_0^m f_{E_R}(u) \left(1 - \frac{u}{m}\right)^2 du,
\end{align}
where
\begin{align}
g(c^*) = \min\left(h(N)\mathbf{1}_{h(N)<0},-\int_{c^*}^\infty f_N(u)\log f_N(u)du\right)
\end{align},
with
\begin{align}\label{eq:cstar}
c^* = \left\{
        \begin{array}{ll}
          0, & \textrm{if } f_N(u) < 1, \forall u; \\
          \inf\{u:f_N(u) = 1\}, & \textrm{else.}
        \end{array}
      \right.
\end{align}
\end{theorem}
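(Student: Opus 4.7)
The plan is to bound $I(X;Z) = h(Z) - h(Z|X)$ by separately upper-bounding $h(Z)$ and lower-bounding $h(Z|X)$, and then combining the two estimates.

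For the upper bound on $h(Z)$, I would observe that $Z \geq 0$ together with $Z = \max\{X+E_T+N-E_R, 0\} \leq X+E_T+N$, so $\mathbb{E}[Z] \leq m + \mu_N + \mu_{E_T}$. Since the exponential distribution maximizes differential entropy among non-negative continuous random variables with a prescribed mean (and the atom of $Z$ at $0$ only reduces the continuous-part entropy further), this yields $h(Z) \leq \log(m + \mu_N + \mu_{E_T}) + 1$, accounting for the first two terms of $C_{UB}$.

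For the lower bound on $h(Z|X)$, I would condition on $E_R$ (which is independent of $X$) and then pointwise on $(X = x, E_R = u)$. Writing $W = N + E_T$: on $\{x \geq u\}$, $Z = (x-u) + W$ is a pure translation of $W$, so $h(Z \mid X=x, E_R=u) = h(N+E_T)$; on $\{x < u\}$, $Z = \max\{W - (u-x), 0\}$ is a mixed random variable, whose entropy decomposes via the indicator $\mathbf{1}(Z>0)$ into a continuous-tail term $-\int_{u-x}^\infty f_W(w)\log f_W(w)\,dw$ plus a non-negative binary-entropy term. To extract the coefficient of $h(N+E_T)$ in the theorem, I would apply the Paley--Zygmund inequality to $X$: for $u \in [0,m]$, $P(X > u) \geq (m^2/a)(1 - u/m)^2$; integrating against $f_{E_R}$ then gives $P(X \geq E_R) \geq (m^2/a)\int_0^m f_{E_R}(u)(1 - u/m)^2\,du$, which is precisely the factor multiplying $h(N+E_T)$ in $C_{UB}$.

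The $\{x < u\}$ branch is handled by splitting the tail integral $-\int_\tau^\infty f_N(w)\log f_N(w)\,dw$ at $c^*$: by construction of $c^*$ (the infimum of $\{u : f_N(u) = 1\}$), on $[c^*,\infty)$ one has $-\log f_N \geq 0$, so $-\int_{c^*}^\infty f_N \log f_N\,dw$ provides a sign-controlled lower bound on the tail; combined with the alternative $h(N)\mathbf{1}(h(N)<0)$, which is used when the whole-line entropy of $N$ is already negative, taking the minimum produces $g(c^*)$, and the outer $\min(g(c^*),0)$ converts this into a valid (non-positive) lower-bound correction that enters $C_{UB}$ as the non-negative additive $-\min(g(c^*),0)$.

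The hard part will be the careful entropic analysis of the mixed atom-plus-continuous distribution that appears on $\{X < E_R\}$, since standard differential-entropy inequalities do not apply verbatim to such distributions; in particular, the threshold $c^*$ must be chosen precisely so that the sign of $-\log f_N$ is controlled on the relevant integration region, and the two candidate bounds in $g(c^*)$ must together cover both the $h(N) \geq 0$ and $h(N) < 0$ regimes. Assembling the three contributions—the max-entropy bound on $h(Z)$, the Paley--Zygmund-scaled $h(N+E_T)$ term, and the $g(c^*)$ correction—yields the claimed $C_{UB}$.
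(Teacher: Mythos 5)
Your overall strategy matches the paper's: split $I(X;Z)=h(Z)-h(Z|X)$, bound $h(Z)$ by the max-entropy exponential argument applied to $\mathbb{E}[Z]\leq m+\mu_N+\mu_{E_T}$, lower-bound $h(Z|X)$ by conditioning on $E_R$ and splitting on $\{x\geq e_R\}$ versus $\{x<e_R\}$, apply Paley--Zygmund to produce the $\frac{m^2}{a}\int_0^m f_{E_R}(u)(1-u/m)^2\,du$ coefficient on $h(N+E_T)$, and absorb the atomic $\{x<e_R\}$ branch into the $g(c^*)$ correction. That is precisely the structure of the paper's proof, so in spirit you have reconstructed it.

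Two details need fixing. First, you write the tail term on $\{x<e_R\}$ as $-\int_{u-x}^\infty f_W(w)\log f_W(w)\,dw$ with $W=N+E_T$, but then silently replace $f_W$ by $f_N$ when introducing $c^*$. The theorem's $c^*$ and $g(c^*)$ are defined through $f_N$, and the paper gets there by \emph{additionally conditioning on $E_T$} inside the $\{e_R>x\}$ region (its step $(d)$ and Lemma~2), which reduces $W$ to a shift of $N$ alone. Without that extra conditioning step the transition from $f_W$ to $f_N$ is unjustified, and the split into the two cases $e_R - x - e_T \lessgtr 0$ (which is where the $h(N)\mathbf{1}_{h(N)<0}$ alternative actually comes from, not from ``the whole-line entropy of $N$ being negative'') never appears. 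Second, the stated reason for choosing $c^*$ is wrong: it is not true that $-\log f_N\geq 0$ on $[c^*,\infty)$; the inverse Gaussian density exceeds $1$ on an interval $(c^*,c^{**})$ immediately after the first unity crossing. The correct justification, used in the paper's Lemma~2, is that $f_N$ is unimodal, so $\tau\mapsto -\int_\tau^\infty f_N\log f_N\,du$ is decreasing for $\tau<c^*$ and increasing just after $c^*$, making $c^*$ the minimizer over the relevant range; the outer $\min(g(c^*),0)$ then handles the sign issue when this minimum is positive. With these two repairs—explicit conditioning on $E_T$ in the $\{e_R>x\}$ region, and the unimodality argument in place of the sign claim—your proof aligns with the paper's.
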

\begin{proof}
See Appendix~\ref{app:thrm_cap_ub}.
\end{proof}

Although the expression in Theorem~\ref{thrm:cap_ub} is complicated, we can easily observe that both mean and variance constraints play an important role in the capacity upper bound. In particular, we observe that the upper bound increases with increasing the variance constraint. On the other hand, the mean constraint plays a more complicated role, which is examined via numerical evaluation in Section~\ref{sec:numerical}.

The numerical evaluation of Theorem~\ref{thrm:cap_ub} is simplified in the case that $\frac{\lambda_N}{\mu_N^2} = \frac{\lambda_{T}}{\mu_T^2}$. This is due to the additivity property of the IG distribution in \cite{chhikara1988inverse}, which guarantees that $N + E_T$ is IG distributed. The entropy of $N + E_T$ can then be easily computed via the following lemma \cite{Chang2012}. In general, the entropy $h(N + E_T)$ can be computed numerically.

\begin{lemma}[Entropy of inverse Gaussian distribution]\label{lem:entropy_IG}
The entropy of an inverse Gaussian distributed random variable with parameters $\mu$ and $\lambda$ is given by
\begin{align}
h_{IG(\mu,\lambda)} = \frac{1}{2}\log\frac{2\pi \mu_U^3}{\lambda} + \frac{3}{2}\exp\left(\frac{2\lambda}{\mu}\right)Ei\left(-\frac{2\lambda}{\mu}\right) + \frac{1}{2},
\end{align}
where
\begin{align}
Ei(-x) = -\int_x^\infty \frac{e^{-t}}{t}dt.
\end{align}
\end{lemma}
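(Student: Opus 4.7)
The plan is to compute the differential entropy straight from its definition, $h_{IG(\mu,\lambda)} = -\mathbb{E}[\log f(X)]$, and split the integrand into three manageable pieces. Taking the logarithm of the inverse Gaussian pdf gives
\begin{align}
\log f(x) = \tfrac{1}{2}\log\tfrac{\lambda}{2\pi} - \tfrac{3}{2}\log x - \tfrac{\lambda(x-\mu)^2}{2\mu^2 x},
\end{align}
so the entropy decomposes into a constant, a $\log X$ term, and a rational term in $X$. The constant piece is immediate and contributes $-\tfrac{1}{2}\log\tfrac{\lambda}{2\pi} = \tfrac{1}{2}\log\tfrac{2\pi}{\lambda}$.

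Next I would dispatch the rational term by the algebraic identity $(x-\mu)^2/x = x - 2\mu + \mu^2/x$, which reduces the task to evaluating $\mathbb{E}[X]$ and $\mathbb{E}[1/X]$. These are classical IG moments, namely $\mathbb{E}[X] = \mu$ and $\mathbb{E}[1/X] = 1/\mu + 1/\lambda$, and both can be verified by direct integration using the standard exponential-completion trick that converts the IG exponent into a Gaussian form in $\sqrt{x}$ or $1/\sqrt{x}$. Substituting gives $\mathbb{E}[(X-\mu)^2/X] = \mu^2/\lambda$, so the contribution of this term to the entropy is exactly $+\tfrac{1}{2}$.

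The main obstacle, and the only genuinely non-trivial step, is evaluating $\mathbb{E}[\log X]$. I would first rescale by setting $Y = X/\mu$, which reduces the computation to a one-parameter integral depending on $\phi = \lambda/\mu$ and isolates the $\log\mu$ shift: $\mathbb{E}[\log X] = \log\mu + \mathbb{E}[\log Y]$. To evaluate $\mathbb{E}[\log Y]$ in closed form I would use the Frullani representation $\log y = \int_0^\infty (e^{-t} - e^{-yt})/t\,dt$, swap the order of integration, and apply the known Laplace transform of the standardized IG density, whose exponent is $\phi - \sqrt{\phi^2 + 2\phi t}$. The remaining single integral in $t$ can be recast, after a change of variable that completes the square in the radical, in terms of the exponential integral $Ei(-x) = -\int_x^\infty e^{-t}/t\,dt$, producing $\mathbb{E}[\log Y] = \exp(2\phi) Ei(-2\phi)$. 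Re-inserting $\phi = \lambda/\mu$ yields $\mathbb{E}[\log X] = \log\mu + \exp(2\lambda/\mu) Ei(-2\lambda/\mu)$.

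Finally, I would assemble the three contributions:
\begin{align}
h_{IG(\mu,\lambda)} &= \tfrac{1}{2}\log\tfrac{2\pi}{\lambda} + \tfrac{3}{2}\bigl(\log\mu + \exp(2\lambda/\mu)Ei(-2\lambda/\mu)\bigr) + \tfrac{1}{2} \notag\\
&= \tfrac{1}{2}\log\tfrac{2\pi\mu^3}{\lambda} + \tfrac{3}{2}\exp(2\lambda/\mu) Ei(-2\lambda/\mu) + \tfrac{1}{2},
\end{align}
which matches the stated expression (the $\mu_U^3$ in the lemma being read as $\mu^3$). The hard work is concentrated entirely in the exponential-integral identity for $\mathbb{E}[\log X]$; every other step is a one-line manipulation or a tabulated IG moment.
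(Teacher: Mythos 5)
The paper itself does not prove this lemma: it is stated verbatim and attributed to reference [Chang2012], so there is no in-paper proof against which to compare. Your derivation is therefore supplying something the paper deliberately omits.

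That said, your proof is correct and the approach is sound. The decomposition of $-\mathbb{E}[\log f(X)]$ into the constant piece $\tfrac{1}{2}\log\tfrac{2\pi}{\lambda}$, the log-moment piece $\tfrac{3}{2}\mathbb{E}[\log X]$, and the quadratic piece $\tfrac{\lambda}{2\mu^2}\mathbb{E}[(X-\mu)^2/X]$ is exactly the natural split. The algebra $(x-\mu)^2/x = x - 2\mu + \mu^2/x$ together with $\mathbb{E}[X]=\mu$ and $\mathbb{E}[1/X]=1/\mu+1/\lambda$ does give $\mathbb{E}[(X-\mu)^2/X]=\mu^2/\lambda$, hence the additive $+\tfrac12$. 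For the log-moment, rescaling to $Y=X/\mu\sim IG(1,\phi)$ with $\phi=\lambda/\mu$, applying the Frullani representation $\log y=\int_0^\infty (e^{-t}-e^{-yt})/t\,dt$, swapping with expectation, and inserting the IG Laplace transform $\mathbb{E}[e^{-tY}]=e^{\phi-\sqrt{\phi^2+2\phi t}}$ indeed works: after the substitution $s=\sqrt{\phi^2+2\phi t}-\phi$ (so $dt=(s+\phi)\,ds/\phi$, $1/t=2\phi/(s(s+2\phi))$), partial fractions give $\tfrac{1}{s}+\tfrac{1}{s+2\phi}$; the $\tfrac{1}{s}$ part cancels the divergent $e^{-t}/t$ piece in the $\epsilon\to 0$ limit, leaving $\mathbb{E}[\log Y]=-\int_0^\infty e^{-s}/(s+2\phi)\,ds=e^{2\phi}Ei(-2\phi)$ with the paper's sign convention $Ei(-x)=-\int_x^\infty e^{-t}/t\,dt$. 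Assembling gives the stated formula, and your reading of $\mu_U^3$ as a typo for $\mu^3$ is the right one. The one place you hand-wave is precisely the cancellation of the two logarithmic divergences at $t=0$ in the Frullani/Laplace swap; it does go through because $s(\epsilon)/\epsilon\to 1$, but a careful write-up would spell that limit out rather than tacitly treating $\int_0^\infty e^{-t}/t\,dt$ and $\int_0^\infty e^{-s}/s\,ds$ as if they individually existed.
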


Further discussion of the behavior of our upper bound is provided in Section~\ref{sec:numerical}.

\subsection{Capacity Lower Bound}

We now turn our attention to the lower bound on the variance-constrained capacity. In general, a lower bound can be found simply by choosing a distribution $p_X(x)$ such that the mean and the variance constraints are satisfied. Unfortunately, accounting for synchronization error complicates concrete calculations by introducing a multi-dimensional integral. At present, the integral can only be evaluated numerically, even with $p_X(x)$ chosen to be exponentially or IG distributed. Although this numerical approach can provide important insights, it is highly desirable to obtain simpler bounds to obtain physical insights into the behavior of the system.

A key approach to obtaining tractable lower bounds on the capacity is via the data processing inequality. In essence, the receiver decodes a processed version of the observed delay $Z = \max\{X + E_T + N - E_R,0\}$.

An important special case of the data processing approach is when the receiver observes the random variable $B$ satisfying
\begin{align}
B = \left\{
      \begin{array}{ll}
        0, & \mathrm{if}~Z = 0 \\
        1, & \mathrm{if}~Z > 0.
      \end{array}
    \right.
\end{align}
Intuitively, this scenario corresponds to a receiver that decodes based on whether or not a molecule is absorbed within the time slot. In molecular communication systems with low computational capabilities or strict power constraints (due to unreliable energy harvesting), this is a practical solution.

To derive an expression for the variance-constrained capacity when the receiver decodes according to $B$. Observe that $X \leftrightarrow Z \leftrightarrow B$ forms a Markov chain. As such, the data processing inequality can be applied to yield
\begin{align}
I(X;Z) \geq I(X;B).
\end{align}
The lower bound on the variance-constrained capacity can then be obtained via $I(X;B) = h(B) - h(B|X)$, where
\begin{align}\label{eq:Cap_LB}
h(B) &= -F_Y(0)\log F_Y(0) - (1 - F_Y(0))\log (1 - F_Y(0)),\notag\\
h(B|X) &= -\int_0^\infty p_X(x)\left[ F_{Y|X = x}(0)\log F_{Y|X = x}(0)\right.\notag\\
&~~~\left.+ (1 - F_{Y|X = x}(0))\log (1 - F_{Y|X = x}(0))\right]dx,
\end{align}
where $Y = X + N + E_T - E_R$. The behavior of the bound is discussed in the next section.

\section{Numerical Results}~\label{sec:numerical}

In this section, we demonstrate the behavior of the molecular timing channel with synchronization error via numerical results. We show that the design of the clock has a significant effect on the performance of the system.

\begin{figure}[h!]
   \begin{center}
        \includegraphics[width=3.7in]{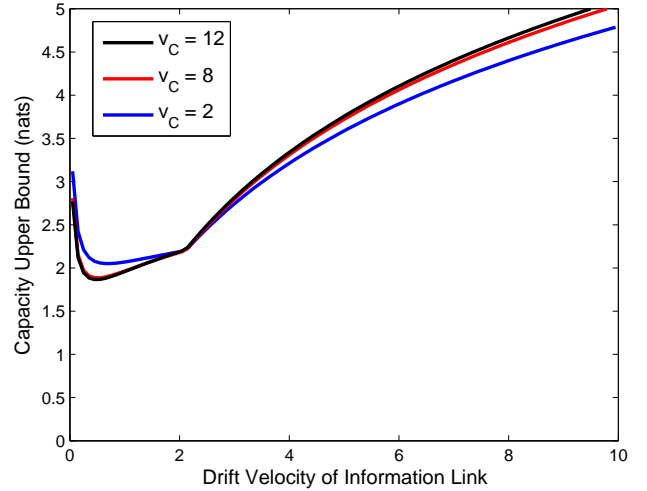}
        \caption{Plot of variance-constrained capacity upper bound versus the drift velocity of the information link, $v_I$, (between the transmitter and receiver). The three curves correspond to different drift velocities for the clock links (where the source is the clock). The mean constraint is $m = 3$ and the variance constraint is $a = 18$.}
        \label{fig:C_UBvsm}
    \end{center}
\end{figure}

Fig.~\ref{fig:C_UBvsm} plots the variance-constrained capacity upper bound versus the drift velocity of the information link $v_I$ for different velocities of the clock links $v_C$ (where the source is the clock), and $d = d_T = d_R = 1$ with $\sigma^2 = 1$. Observe that there  are diminishing gains as $v_C$  increases. Moreover, most of the capacity gains are made for relatively low clock-link drift velocities compared with the drift velocity of the information link. This is important as it means that the capacity with perfect synchronization can be achieved by relatively low clock link drift velocities.

Fig.~\ref{fig:C_UBvsm} also shows that the capacity bound is large at near-zero drift velocities for the information link. As such, the upper bound is not tight at low velocities of the information link. This is also consistent with the bounds obtained in \cite{Srinivas2012,Chang2012}, suggesting that an alternative approach is required to gain insights at low velocities.

\begin{figure}[h!]
   \begin{center}
        \includegraphics[width=3.2in]{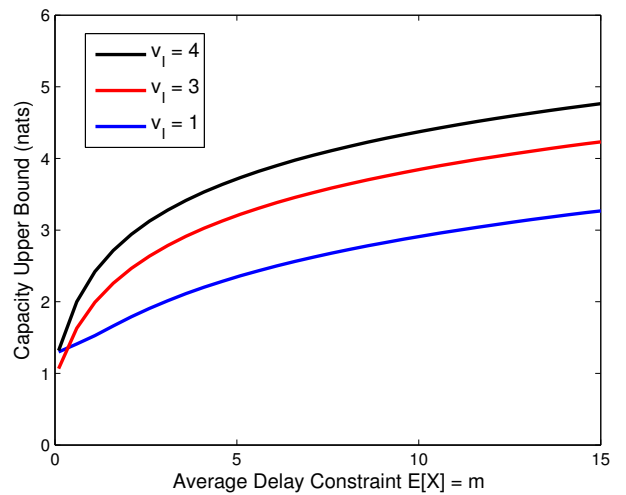}
        \caption{Plot of variance-constrained capacity upper bound versus the average delay constraint $\mathbb{E}[X] = m$ and variance constraint $\mathbb{E}[X^2] \leq 2m^2$, with $v_C = 4$.}
        \label{fig:C_UBvsdr}
    \end{center}
\end{figure}

Fig.~\ref{fig:C_UBvsdr} plots the variance-constrained capacity upper bound in \eqref{upper bound} as the average delay constraint $\mathbb{E}[X] = m$ varies. We consider  symmetric network with $d = d_T = d_R = 1$, $\sigma^2 = 1$,  varying drift velocity $v_I$ of the information link, and a drift velocity $v_C = 4$ for the clock links. Observe that significant gains can be achieved by increasing the drift velocity of the information link from $v_I = 1$ to $v_I = 3$. This suggests that the delay constraint needs to be carefully matched with the drift velocity in order to ensure efficient operation of the network.

\begin{figure}[h!]
   \begin{center}
        \includegraphics[width=3.7in]{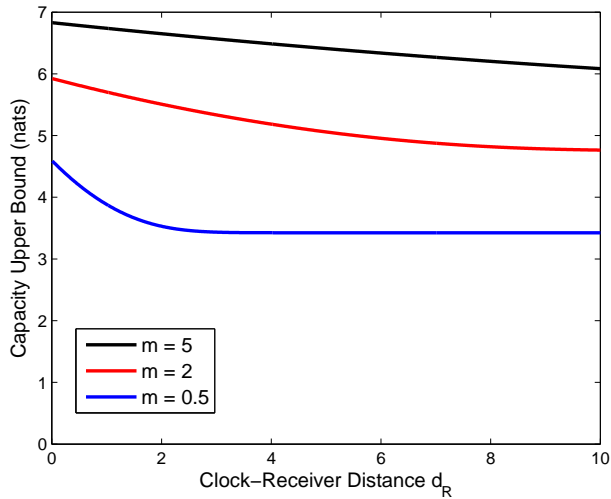}
        \caption{Plot of variance-constrained capacity upper bound versus the distance between the clock and the receiver $d_R$.}
        \label{fig:C_UBdist}
    \end{center}
\end{figure}

Fig~\ref{fig:C_UBdist} plots the capacity upper bound in \eqref{upper bound} as the distance between the clock and the receiver varies. The clock-transmitter distance $d_T = 0.1$, the transmitter-receiver distance $d = 0.1$, $\sigma^2 = 1$, and $a = 2m^2$. Observe that for the strictest constraint ($m = 0.5$) the capacity decreases rapidly as the distance $d_R$ increases, while for larger $m$ the capacity decreases slowly. This suggests that asymmetries in the network topology need to be carefully considered for small delay constraints.

\begin{figure}[h!]
   \begin{center}
        \includegraphics[width=3.2in]{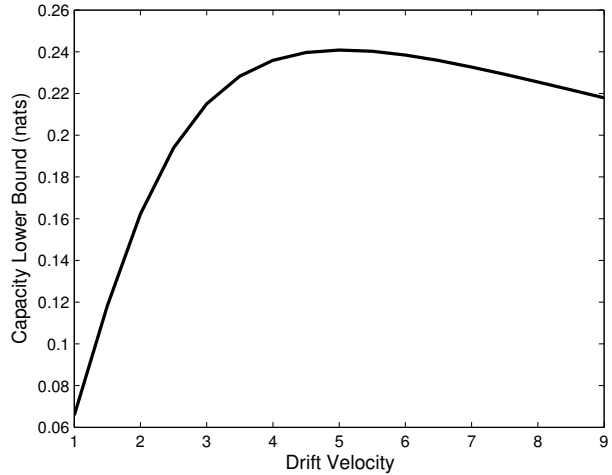}
        \caption{Plot of variance-constrained capacity lower bound  with varying drift velocity (common for all links). We set $p_X(x)$ to be exponentially distributed with $\mathbb{E}[X] = m$, $d = d_R = d_T = 1$, and $\sigma^2 = 1$.}
        \label{fig:C_LBvsv}
    \end{center}
\end{figure}

Fig.~\ref{fig:C_LBvsv} plots the capacity lower bound in (\ref{eq:Cap_LB}) as the drift velocity varies. We assume that the drift velocity is the same for each link and that the input distribution $p_X(x)$ is exponentially distributed with mean $m$. Observe that the lower bound is maximized for the drift velocity $v \approx 5$. At this velocity, approximately $1$ bit ($0.3$ nats) is achievable, which is the maximum possible as the encoding is binary. This explains why the capacity lower bound is significantly less than the capacity upper bound, which is not restricted to binary encoding. As such, there is a high cost that comes with low processing capabilities (potentially due to energy constraints).

\section{Conclusion}

Synchronization error is an important and previously neglected aspect of the molecular timing channel. We introduced a new model of the timing channel, which includes synchronization error induced via the random delay of molecules sent from a global clock. We then evaluated the model by deriving new upper and lower bounds on the variance-constrained capacity. We obtained important practical insights via numerical evaluation of our bounds. In particular, our results suggest that in order to achieve the capacity with perfect synchronization, the drift velocity of the clock links does not need to be significantly larger than the drift velocity of the information link. We also observed that the upper bound is not tight at low velocity, as in \cite{Srinivas2012,Chang2012}. Motivated by this observation, we are currently developing accurate low velocity capacity approximations.

\appendices

\section{Proof of Theorem~\ref{thrm:cap_ub}}\label{app:thrm_cap_ub}

\begin{proof}
Let $Z = \max\{X + N + E_T - E_R,0\}$ and $Z' = X+ E_T + N$. Clearly, $\mathbb{E}[Z] \leq \mathbb{E}[Z']$. Now denote, $R$ and $R'$ as exponentially distributed random variables with mean $\mathbb{E}[Z]$ and $\mathbb{E}[Z']$, respectively. Then
\begin{align}
I(X;Z) &= h(Z) - h(Z|X)\notag\\
&\overset{(a)}{\leq} h(R) - h(Z|X)\notag\\
&\overset{(b)}{\leq} h(R') - h(Z|X)\notag\\
&\overset{(c)}{\leq} h(R') - h(Z|X,E_R),
\end{align}
where $(a)$ follows from the fact that the exponential distribution maximizes entropy over all random variables with positive support and mean $\mathbb{E}[Z]$ \cite{Cover2006}, and $(b)$ follows from the fact that $\mathbb{E}[Z'] \geq \mathbb{E}[Z]$, the entropy of exponentially distributed random variables increases with the mean. $(c)$ follows since conditioning reduces entropy.

Now observe that
\begin{align}
h(Z|X,E_R) &\overset{(d)}{\geq} \int_0^\infty \int_0^x p_X(x) f_{E_R}(e_R)\notag\\
&~~~\times h(Z|X=x,E_R=e_R) de_Rdx\notag\\
 &~~~+ \int_0^\infty \int_0^\infty \int_x^\infty p_X(x)f_{E_T}(e_T)f_{E_R}(e_R)\notag\\
 &~~~\times h(Z|X=x,E_R=e_R,E_T=e_T) de_Rde_Tdx\notag\\
\end{align}
where $(d)$ follows since conditioning reduces entropy.

We then have
\begin{align}
h(Z|X,E_R) &\overset{(e)}{\geq} h(N + E_T) \int_0^\infty \int_0^x f_X(x)f_{E_R}(e_R)de_Rdx\notag\\
&~~~ + g(c^*)\notag\\
&\overset{(f)}{=} h(N + E_T)\int_0^\infty f_{E_R}(x) \mathrm{Pr}(X > x)dx\notag\\
&~~~ + g(c^*),
\end{align}
where $(e)$ is obtained using Lemma~\ref{lem:e_bound} and $(f)$ follows from integration by parts.
\begin{lemma}\label{lem:e_bound}
\begin{align}
V &= \int_0^\infty\int_0^\infty \int_x^\infty h(Z|X=x,E_R=e_R,E_T=e_T)p_X(x)\notag\\
&~~~\times f_{E_R}(e_R)f_{E_T}(e_T)de_Rdxde_T\notag\\
& \geq -\int_{c^*}^\infty f_N(u)\log f_N(u)du,
\end{align}
where $c^*$ is given by (\ref{eq:cstar}).
\end{lemma}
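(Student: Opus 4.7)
The plan is to evaluate the innermost conditional entropy $h(Z\,|\,X{=}x,E_R{=}e_R,E_T{=}e_T)$ explicitly at each point of the integration region, then reorder the resulting triple integral via Fubini so that the claim reduces to a one-dimensional comparison in the variable $u$. Fix $(x,e_T,e_R)$ in $\{0<x,\,0<e_T,\,e_R>x\}$ and set $b = e_R - x - e_T$. Conditional on this triple, $Z = \max\{N - b,0\}$ depends on $N$ alone, and the computation splits into two cases: if $b\leq 0$ the max is inactive and $Z$ is a translate of $N$, so $h(Z\,|\,x,e_R,e_T) = h(N)$; if $b > 0$, the conditional law of $Z$ places mass $F_N(b)$ on $\{0\}$ and has density $f_N(z+b)$ on $(0,\infty)$. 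Accounting for the atom via the nonnegative Shannon term $-F_N(b)\log F_N(b)$ and computing the differential entropy of the continuous part gives
\begin{equation*}
h(Z\,|\,x,e_R,e_T)\geq -\int_b^\infty f_N(u)\log f_N(u)\,du.
\end{equation*}

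Substituting these bounds into $V$, changing variables from $e_R$ to $b$ at fixed $(x,e_T)$ on the piece $\{b>0\}$, and swapping the order of the $b$- and $u$-integrals by Fubini collapses the triple integral to
\begin{equation*}
V \geq -\int_0^\infty f_N(u)\log f_N(u)\,\pi(u)\,du,
\end{equation*}
where the weight
\begin{equation*}
\pi(u)=\Pr(E_R-X\leq E_T,\,E_R>X)+\Pr(0<E_R-X-E_T\leq u)
\end{equation*}
is nondecreasing in $u$ and satisfies $0\leq \pi(u)\leq \Pr(E_R>X)\leq 1$.

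The final step splits the outer integral at $c^*$. On $(0,c^*)$ the defining property $f_N(u)<1$ forces $-f_N(u)\log f_N(u)\geq 0$, so the corresponding piece of $V$ is nonnegative and can simply be dropped. On $(c^*,\infty)$ the factor $1-\pi(u)\geq 0$, and for the unimodal inverse Gaussian density $f_N$ the integrand $f_N\log f_N$ is nonnegative on the subinterval near the mode where $f_N\geq 1$; combining these signs yields a nonnegative contribution to the difference $V-\bigl(-\int_{c^*}^\infty f_N(u)\log f_N(u)\,du\bigr)$ on that subinterval.

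The main obstacle will be the far right tail of $f_N$, where $f_N$ eventually falls below $1$ again; there $-f_N\log f_N$ and $1-\pi$ are both nonnegative, so the pointwise sign check fails and the naive lower bound is too loose. My plan is to close this gap by reinstating the atom contribution $-F_N(b)\log F_N(b)$ that was dropped in the first step, or equivalently via an integration-by-parts argument exploiting the monotonicity of $\pi$, so that the total excess $V-\bigl(-\int_{c^*}^\infty f_N(u)\log f_N(u)\,du\bigr)$ ends up nonnegative overall.
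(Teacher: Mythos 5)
Your route genuinely differs from the paper's. The paper argues \emph{pointwise} on the conditional entropy: it evaluates $h(Z\mid x,e_R,e_T)$ exactly as you do, drops the nonnegative atom term, and then asserts directly that for $c=e_R-x-e_T>0$,
\[
-\int_c^\infty f_N(u)\log f_N(u)\,du \;\geq\; -\int_{c^*}^\infty f_N(u)\log f_N(u)\,du,
\]
citing unimodality of $f_N$, together with a separate bound $h(N)\mathbf{1}_{h(N)<0}$ for $c<0$; it never interchanges the order of integration. You instead push the $(x,e_T,e_R)$ integration inside via Fubini and a change of variables, producing a weight $\pi(u)\in[0,1]$ that is nondecreasing, and reduce the claim to a one-dimensional comparison. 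Your reformulation is cleaner and exposes exactly where the difficulty lies, whereas the paper simply asserts the pointwise monotonicity.

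However, the proof as written has a genuine, unresolved gap, and you concede as much. Your sign analysis works on $(0,c^*)$ and on the sub-mode interval where $f_N\geq 1$, but on the far tail $(c^{**},\infty)$---where $f_N$ has fallen back below $1$---both $-f_N\log f_N$ and $1-\pi$ are nonnegative, so the pointwise inequality goes the wrong way and the contribution to $V-\bigl(-\int_{c^*}^\infty f_N\log f_N\bigr)$ is negative. The closing paragraph proposes two \emph{candidate} fixes (reinstating the atom $-F_N(b)\log F_N(b)$, or integration by parts in $\pi$) but does not carry either out, and neither is obviously sufficient: reinstating the atom adds a nonnegative term whose size in the tail is not compared against the deficit, and an integration-by-parts argument in $\pi$ reduces the claim to showing $\int_0^\infty G(u)\,d\pi(u)+G(0)\pi(0)\geq G(c^*)$ with $G(u)=-\int_u^\infty f_N\log f_N$, which still requires $G(u)\geq G(c^*)$ uniformly---precisely the tail fact that fails whenever $G(c^*)>0$. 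So the proposal, as it stands, establishes the inequality only on a strict subinterval and the ``plan'' is not a proof. For what it's worth, the paper's own one-line appeal to unimodality faces the identical tail objection; your Fubini reformulation actually surfaces the issue more honestly, but it still needs a complete argument (e.g., a case split on the sign of $G(c^*)$, or a direct estimate of the tail deficit against the nonnegative contributions) before it counts as a proof of the Lemma.
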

\begin{proof}
See Appendix~\ref{app:lemma_proof}
\end{proof}

Next, we need the Paley-Zygmund inequality, which is
\begin{align}
\mathrm{Pr}(X \geq \theta \mathbb{E}[X]) \geq (1 - \theta)^2\frac{(\mathbb{E}[X])^2}{\mathbb{E}[X^2]},
\end{align}
where $0 < \theta < 1$ and $\mathbb{E}[X] = m$. Continuing our argument, we have
\begin{align}
h(Z|X,E_T) &\geq h(N + E_T)\int_0^m f_{E_R}(x) \mathrm{Pr}(X > x)dx + g(c^*)\notag\\
&\overset{(g)}{\geq} h(N + E_T) \int_0^1 f_{E_R}(\theta m) (1 - \theta)^2 \notag\\
&~~~\times \frac{m^2}{\mathbb{E}[X^2]}md\theta + g(c^*)\notag\\
&= h(N + E_T)\frac{m^2}{\mathbb{E}[X^2]} \int_0^m f_{E_R}(u)\notag\\
&~~~\times  \left(1 - \frac{u}{m}\right)^2 du + g(c^*),
\end{align}
where $(g)$ follows from the Paley-Zygmund inequality. Observe that the integral is bounded, with an upper limit of the integral of $m$. This is to ensure that the Paley-Zygmund inequality can be applied.

Finally, using the variance constraint $\mathbb{E}[X^2] \leq a$, we obtain the bound.
\end{proof}

\section{Proof of Lemma~\ref{lem:e_bound}}\label{app:lemma_proof}

\begin{proof}
The conditional pdf of $Z$ is given by
\begin{align}
f_{Z|X,E_R,E_T}(z) = \left\{
           \begin{array}{ll}
             0, & \mathrm{if}~z < 0 \\
             \mathrm{Pr}(N < -x - e_T + e_R), & \mathrm{if}~z = 0 \\
             f_N(z - x - e_T + e_R), & \mathrm{if}~z > 0.
           \end{array}
         \right.
\end{align}
We then have
\begin{align}\label{eq:entropy}
& h(Z|X = x,E_R = e_R, E_T = e_T)\notag\\
&= - \left[\mathrm{Pr}(N < -x - e_T + e_R)\log \mathrm{Pr}(N < -x - e_T + e_R)\right.\notag\\
&~~~\left. + \int_0^\infty f_N(z + e_R - x - e_T)\log f_N(z + e_R - x - e_T)dz\right].
\end{align}
There are two cases: $c = e_R - x - e_T < 0$, for which $h(Z|X = x,E_R = e_R, E_T = e_T) \geq h(N)\mathbf{1}_{h(N) < 0}$; and $c > 0$. In the latter case, we can observe that
\begin{align}
-\int_c^\infty f_N(u)\log f_N(u)du \geq -\int_{c^*}^\infty f_N(u)\log f_N(u)du,
\end{align}
where $c^*$ is given by (\ref{eq:cstar}). This follows since the inverse Gaussian pdf has a single maximum. Putting the two cases together yields the result.
\end{proof}

\bibliographystyle{ieeetr}
\bibliography{Timing_Channel}

\end{document}